\definecolor{darkgreen}{rgb}{0,0.5,0}
\definecolor{darkblue}{rgb}{0,0,0.8}
\newtheorem{theorem}{Theorem}[section]
\newtheorem{lemma}[theorem]{Lemma}
\newcommand{\ignore}[1]{}
\algnewcommand\algorithmicswitch{\textbf{switch}}
\algnewcommand\algorithmiccase{\textbf{case}}
\newcommand{\LOCAL}{\ensuremath{\mathsf{LOCAL}}\xspace}
\newcommand{\SLOCAL}{\ensuremath{\mathsf{SLOCAL}}\xspace}
\newcommand{\eps}{\varepsilon}
\newcommand{\poly}{\operatorname{\text{{\rm poly}}}}
\DeclareMathOperator{\polylog}{\poly\log}
\newcommand{\complexityclass}[2][]{\ensuremath{\mathsf{#2}\ifthenelse{\isempty{#1}}{}{(#1)}}}
\newcommand{\pslocal}{\complexityclass{P}\text{-}\complexityclass{SLOCAL}\xspace}
\newcommand{\hide}[1]{}
\newcommand{\FullOrShort}{short}
  \newcommand{\fullOnly}[1]{#1}
  \newcommand{\shortOnly}[1]{}
  \newcommand{\shortOnly}[1]{#1}
  \newcommand{\fullOnly}[1]{}
\begin{document}

\title{$\pslocal$-Completeness of Maximum Independent Set Approximation}

\author{
Yannic Maus ---
Technion, Israel ---
\small\tt yannic.maus@campus.technion.ac.il
}

\date{}

\maketitle

\thispagestyle{empty}

\begin{abstract}
 We prove that the maximum independent set approximation problem with polylogarithmic approximation factor is \pslocal-complete.
\end{abstract}

\section{Introduction}
To formally present our results we recall the model of computation. 
\paragraph{The \LOCAL Model of distributed computing~\cite{linial92}.} A graph is abstracted as an $n$-node network $G=(V, E)$ with maximum degree $\Delta$. Communications happen in synchronous rounds. Per round, each node can send one (unbounded size) message to each of its neighbors. At the end, each node should know its own part of the output, e.g., whether it belongs to an independent set or its own color in a vertex coloring. Typically, an algorithm
in the \LOCAL model is considered \emph{fast} or \emph{efficient} if its time complexity is
polylogarithmic in the number of nodes $n$.

The maximal independent set  problem (MIS)\footnote{An \emph{MIS} of a graph $G=(V,E)$ is an inclusion maximal independent set $M\subseteq V$. A \emph{maximum independent set (MaxIS)} is an independent set with maximum cardinality; this cardinality is denoted by the \emph{independence number} $\alpha(G)$. A $\lambda$-approximation of a maximum independent set is an independent set of size at least $\alpha(G)/\lambda$~.}  and the $(\Delta+1)$-vertex coloring problem have fast randomized algorithms \cite{luby86} and exponentially slower deterministic algorithms \cite{awerbuch89}. 
The question whether the MIS problem has a polylogarithmic time deterministic algorithm dates back to Linial's seminal paper \cite{linial92} and is still considered to be one of the most important open questions of the area \cite{MausThesis,barenboimelkin_book}. 
With the objective to shed some light on \emph{the role of randomness} in solving problems like MIS or vertex coloring efficiently Ghaffari, Kuhn and Maus \cite{SLOCAL17} defined a sequential variant of the \LOCAL model, the \SLOCAL model. In an \SLOCAL algorithm with complexity (or locality) $r$ the nodes of the network graph are processed in an arbitrary order. When a node $v$ is processed it can see the current state of all nodes in its $r$-hop neighborhood (including all topological information of this neighborhood) and its output can be an arbitrary function of this neighborhood. Additionally, it can store information that can be read by later nodes as part of $v$' state. The maximal independent set problem admits an \SLOCAL algorithm with locality $r=1$ by iterating through the nodes in an arbitrary order and joining the independent set if none of the already processed neighbors is already contained in the set. An \SLOCAL algorithm with constant locality is not known for the maximum independent set problem.
Then, \cite{SLOCAL17} defines the class \pslocal, i.e., the class of problems that can be solved with polylogarithmic complexity in the \SLOCAL model. Besides MIS and vertex-coloring the class contains many other classic and related problems.  
In particular, it contains all problems that can be solved efficiently by randomized algorithms in the \LOCAL model as long as a solution of the problem can be verified efficiently \cite{newHypergraphMatching}. 
Most important for our results is the notion of  \pslocal-completeness: A problem is \emph{\pslocal-complete} if it is contained in the class \pslocal and if it is \emph{\pslocal-hard}, that is, any other problem in the class can be efficiently reduced to it via a local reduction.\footnote{For the formal definitions of reductions, the model and the complexity class we refer to \cite{SLOCAL17}. For this work it is sufficient to think of a reduction from problem $\mathcal{B}$ to problem $\mathcal{A}$ as a \LOCAL algorithm that uses an algorithm for problem $\mathcal{A}$ to solve problem $\mathcal{B}$ while only incurring a polylogartihmic overhead.} If any \pslocal-complete problem can be solved efficiently by a deterministic algorithm in the \LOCAL model all problems in the class \pslocal can be solved efficiently by deterministic algorithms; this includes the MIS and vertex coloring problem.



Among few others the following  are \pslocal-\-complete problems: $(\polylog n,\polylog n)$-network decompositions, (weak) local splittings \cite{SLOCAL17}, approximations of dominating set and distributed set cover \cite{newHypergraphMatching}. However, it would be most interesting if the MIS or the vertex coloring problem were shown to be complete \cite{MausThesis}.
 We cannot show either one but we show that computing sufficiently good approximations for the \underline{maximum} independent set problem is complete.
\begin{theorem}\label{thm:mainMaxISComplete}
Polylogarithmic maximum independent set approximation is \pslocal-complete.
\end{theorem}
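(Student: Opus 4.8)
The plan is to establish the two directions of \pslocal-completeness separately: membership in \pslocal and \pslocal-hardness.

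The plan is to prove the two halves of completeness — membership in \pslocal and \pslocal-hardness — and the unifying idea for both is the exact, locality-preserving correspondence between independent sets and packings. An independent set in a graph $G$ is precisely a sub-collection of pairwise-disjoint sets in the set system that assigns to each vertex $v$ the set $\delta(v)$ of its incident edges over the universe $E(G)$, since two vertices are non-adjacent iff their incident-edge sets are disjoint. Conversely, the packings of any set system $\mathcal{F}$ over a universe $X$ are exactly the independent sets of its intersection graph $H_{\mathcal{F}}$, which has one node per set and an edge between every two intersecting sets. In both translations the optimum is preserved exactly, so a $\lambda$-approximation on one side is a $\lambda$-approximation on the other; this is what lets a polylogarithmic approximation factor survive the reduction.

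For membership I would observe that maximum independent set is a special case of maximum set packing via the map $v \mapsto \delta(v)$, and that this map is a local reduction: the intersection structure of $\{\delta(v)\}_{v \in V}$ is $G$ itself, so one communication round in the derived instance is simulated by one round in $G$. Since polylogarithmic set packing / hypergraph matching approximation lies in \pslocal~\cite{newHypergraphMatching}, membership of polylogarithmic maximum independent set approximation would follow. Alternatively, one can invoke the characterization that any problem admitting an efficient randomized \LOCAL algorithm with efficiently verifiable output lies in \pslocal~\cite{newHypergraphMatching}; a concrete route is to compute an $(\polylog n,\polylog n)$-network decomposition — itself in \pslocal — compute within each cluster a maximum independent set of that cluster, and for each of the $k=\polylog n$ color classes let $J_i$ be the union of these cluster-optima, which is independent because equally-colored clusters are non-adjacent. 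As $\sum_i |J_i| \ge \alpha(G)$, some $J_i$ has size at least $\alpha(G)/k$; the only subtlety is selecting such a class, which is a global decision and is what forces one to route membership through the \pslocal machinery rather than a naive local rule.

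For hardness it suffices, by transitivity of local reductions and the completeness of a single canonical problem, to reduce one known \pslocal-complete problem to maximum independent set approximation. I would reduce from the packing-type complete problem (maximum set packing / hypergraph matching approximation) using the intersection-graph construction: from an instance $\mathcal{F}$ build $H_{\mathcal{F}}$ locally — two sets sharing an element detect this through that element in $O(1)$ rounds — call the assumed maximum-independent-set approximator on $H_{\mathcal{F}}$, and read the returned independent set back as a packing. Because $\alpha(H_{\mathcal{F}})$ equals the maximum packing size, the polylogarithmic factor is preserved and the overhead is polylogarithmic, so the reduction is valid; since every problem in \pslocal reduces to the chosen complete problem, every problem reduces to maximum independent set approximation, establishing \pslocal-hardness.

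I expect the main obstacle to be the hardness direction, specifically securing an appropriate complete problem to reduce from. The cited completeness results~\cite{SLOCAL17,newHypergraphMatching} center on covering problems (network decomposition, dominating set, set cover), whereas independent set is a packing problem, and there is no generic approximation-preserving local reduction from covering to packing. The clean intersection-graph argument therefore needs a complete \emph{packing} problem; if one is not directly available, the real work is either to establish completeness of the packing counterpart (e.g.\ hypergraph matching approximation) as an intermediate step, or to design a bespoke gadget reduction from a covering-type complete problem into an independent-set instance that preserves both locality and the polylogarithmic factor. Secondary technical points are verifying that the intersection graph is faithfully constructible in the distributed set-system model without the degree blow-up inflating the locality beyond polylogarithmic, and, on the membership side, handling the global selection step noted above.
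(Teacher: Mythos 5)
Your proposal has a genuine gap, and it sits exactly where you suspect: the hardness direction. Your plan is to reduce a \pslocal-complete \emph{packing} problem (maximum set packing / hypergraph matching approximation) to MaxIS approximation via intersection graphs, but no such completeness result is available — the works you cite establish completeness of covering-type problems (network decomposition, dominating set, set cover), not of any packing problem. Worse, the missing lemma is essentially equivalent to the theorem you are trying to prove: MaxIS is a special case of set packing (via $v\mapsto\delta(v)$), and set packing reduces back to MaxIS via your intersection-graph map, so ``polylogarithmic set packing approximation is \pslocal-hard'' and ``polylogarithmic MaxIS approximation is \pslocal-hard'' are interchangeable statements under local reductions. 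Your proposal therefore defers the entire content of the theorem to an assumed intermediate result; the intersection-graph translation itself, while correct, carries no weight. The same circularity infects your first membership route: the claim that polylogarithmic set packing approximation lies in \pslocal is not established in the cited reference (which concerns derandomization via efficient verification and covering-problem completeness), and no efficient randomized \LOCAL algorithm for polylogarithmic MaxIS/set-packing approximation is known that would let you invoke that characterization. Your second membership route (per-cluster optima over a network decomposition) founders on the global selection of the best color class, which you acknowledge but do not resolve; ``the \pslocal machinery'' does not make global aggregation available, since an \SLOCAL node only ever sees a polylogarithmic-radius ball. The paper sidesteps all of this by citing \cite[Theorem 7.1]{SLOCAL17} for containment.

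What you are missing on the hardness side is the choice of source problem and the gadget that makes it work. The paper reduces from conflict-free \emph{multi}coloring of almost-uniform hypergraphs (\Cref{thm:conflictComplete}), a known \pslocal-complete problem that is neither covering- nor packing-type and which you did not consider. Two ideas do the work. First, the conflict graph $G_k$ on triples $(e,v,c)$ with edge classes $E_{vertex}, E_{edge}, E_{color}$, for which \Cref{lem:technical} gives a two-way correspondence: a conflict-free $k$-coloring yields an independent set of size $|E(H)|$ (one node per hyperedge), and conversely any independent set $\mathcal{I}$ yields a well-defined partial coloring in which at least $|\mathcal{I}|$ hyperedges are ``happy.'' Second — and this is the step that makes a mere $\lambda=\polylog n$ approximation factor usable — a single call to the approximate MaxIS oracle only makes a $1/\lambda$ fraction of the hyperedges happy, so the reduction iterates: run $\rho=\lambda\ln m+1$ phases, each with a fresh palette of $k$ colors, remove the happy edges after each phase, and observe that $|E_{i+1}|\leq(1-1/\lambda)|E_i|$ drives the edge count below $1$. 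This is precisely why the target must be \emph{multi}coloring (nodes accumulate one color per phase) and why the total color count stays at $k\rho=\polylog n$. Without an analogue of this amplification-by-iteration argument, no single-shot reduction from a complete problem to a $\polylog n$-approximate MaxIS oracle can succeed, because the oracle's guarantee is too weak to satisfy the source problem's constraints in one application.
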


The proof of \Cref{thm:mainMaxISComplete}  is through a reduction from the conflict-free multicoloring problem, one of the first problems known to be \pslocal-complete \cite{SLOCAL17}. The objective of the conflict-free $k$-coloring problem is to compute a vertex coloring $f:V\rightarrow \{1,\ldots,k\}$ of a hypergraph $H=(V,E)$  such that for each edge $e\in E$ there exists a node $v\in e$ with a unique color, that is, there is no $u\neq v$ with $u\in e$ and $f(u)=f(v)$. 
We call an edge with this property \emph{happy} in coloring $f$---in proofs we consider colorings in which only some edges are happy. 
In the conflict-free multicoloring problem each node is allowed to have more than one color and all other requirements are the same. For a given constant $0<\eps\leq 1$ we call a hypergraph $H=(V,E)$ \emph{almost uniform} if there is an arbitrary $k$ such that for all edges $e\in E$ we have $k\leq |e|\leq (1+\eps)k$~.
\begin{theorem}[\cite{SLOCAL17}]
\label{thm:conflictComplete}
Conflict-free multicoloring with $\polylog n$
colors in almost uniform hypergraphs with $\poly n$ hyperedges is
\pslocal-complete.
\end{theorem}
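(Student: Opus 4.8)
The claim is a completeness statement, so the plan has two halves: membership in \pslocal and \pslocal-hardness.

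\textbf{Membership.} The plan for the easy direction is to exhibit an efficient randomized \LOCAL algorithm together with an efficient local verifier, and then invoke the fact \cite{newHypergraphMatching} that every efficiently verifiable problem admitting an efficient randomized \LOCAL algorithm lies in \pslocal. The randomized algorithm exploits almost-uniformity directly. Fix the common scale $k$ with $k\le|e|\le(1+\eps)k$ for all edges and run $T$ independent trials, where trial $t\in\{1,\dots,T\}$ owns its own color $t$: each vertex independently adds color $t$ to its (multi-)color set with probability $p=1/k$. Because every edge has size $m\in[k,(1+\eps)k]$, a single tuned $p$ works for all edges at once: $\prob{\text{exactly one } v\in e \text{ gets color } t}=mp(1-p)^{m-1}=\Omega_\eps(1)$, since $mp\in[1,1+\eps]$ and $(1-p)^{m-1}=\Omega(1)$. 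Whenever exactly one vertex of $e$ receives color $t$, that color is unique in $e$, so $e$ is happy. Over $T$ independent trials an edge stays unhappy with probability at most $(1-\Omega_\eps(1))^{T}$, so $T=O_\eps(\log n)$ trials and a union bound over the $\poly n$ edges make every edge happy \whp{} while using only $T=\polylog n$ colors; it is a genuine multicoloring since a vertex may hold several colors. Verification is local: on the bipartite incidence graph of $H$ each edge checks in $O(1)$ rounds whether one of its colors is unique, which completes the membership direction.

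\textbf{Hardness.} For the hard direction it suffices to reduce a single \pslocal-complete problem $\mathcal B$ to conflict-free multicoloring via a \LOCAL reduction of polylogarithmic overhead. I would take $\mathcal B$ to be one of the complete problems with a representative-selection flavor, e.g.\ weak splitting or network decomposition \cite{SLOCAL17}, whose structure matches the unique-vertex-per-edge guarantee of conflict-free coloring. Concretely, from an instance $G$ of $\mathcal B$ I would build a hypergraph $H$ whose hyperedges encode the neighborhoods (or bounded-radius balls, or constraint sets) of $G$ that must be ``hit,'' ask the oracle for a conflict-free multicoloring of $H$, and decode the uniquely colored vertex of each edge into a representative that drives a solution of $\mathcal B$ (a splitting, or the clusters of a decomposition), all within $O(1)$ additional \LOCAL rounds.

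\textbf{Main obstacle.} The two hypotheses of \Cref{thm:conflictComplete} are exactly what make the reduction delicate, and this is where I expect the real work. First, the constructed hypergraph must be almost uniform, whereas the natural encoding produces edges of very different sizes; I would equalize them by padding short edges with auxiliary vertices up to the common scale $(1\pm\eps)k$, while \emph{sharing} the auxiliary vertices across many edges so that a dummy vertex can never cheaply become the unique representative and trivialize an edge. Second, I must keep the number of hyperedges polynomial, which forces the padding and the ball radius to stay bounded. The crux is then to prove that, under these structural restrictions, \emph{every} conflict-free multicoloring of $H$ still decodes to a \emph{correct} solution of $\mathcal B$ — i.e.\ that the happy-edge guarantee survives the padding and always yields a genuine representative of the original instance — and that the whole construct-and-decode pipeline is a local reduction with only $\polylog n$ overhead. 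Establishing this faithful, almost-uniform, polynomially bounded encoding is the main obstacle; the membership direction, by contrast, is routine given the randomized-to-\pslocal machinery.
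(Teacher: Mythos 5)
You should first note that the paper does not prove this statement at all: \Cref{thm:conflictComplete} is imported verbatim from \cite{SLOCAL17}, and its role in this paper is to serve as the source of hardness for \Cref{thm:mainMaxISComplete}. So the comparison here is between your attempt and the original proof in \cite{SLOCAL17}. Your membership half is essentially the known argument and is fine in outline: almost uniformity guarantees that sampling each color independently with probability $\approx 1/k$ makes each edge happy with constant probability (since $|e|\cdot p\in[1,1+\eps]$), $O(\log n)$ independent color classes plus a union bound over the $\poly n$ edges finish the randomized algorithm, the solution is locally checkable on the incidence graph, and the randomized-to-\pslocal transfer for locally verifiable problems puts the problem in the class. (One small point you gloss over: a vertex does not know the global scale $k$, but it can use the size of any incident edge, which is within a $(1+\eps)$ factor of $k$, so the constant-probability bound survives.)

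The hardness half, however, has a genuine gap: it is a plan, not a proof, and you say so yourself (``this is where I expect the real work''). You never fix the source problem, never give the actual construction of $H$, and never prove the crux, namely that \emph{every} conflict-free multicoloring of the padded, almost-uniform hypergraph decodes to a correct solution of $\mathcal B$. Two of the difficulties you name are real and unresolved as written. First, if auxiliary padding vertices are shared across edges, nothing prevents a dummy vertex from being the uniquely colored vertex of an edge, in which case that edge decodes to no representative at all; ``sharing so that a dummy can never cheaply become unique'' is exactly the statement that needs a construction and a proof, and you supply neither. Second, in the \emph{multi}coloring version a vertex may carry many colors and be the unique holder of different colors in many edges simultaneously, so the ``decode the uniquely colored vertex of each edge'' step must be shown to produce a consistent solution of $\mathcal B$ rather than a per-edge choice that conflicts across edges. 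Since a completeness claim stands or falls with the hardness reduction, and the reduction is left as an identified obstacle rather than carried out, the attempt does not establish the theorem; it correctly locates where the work lies in \cite{SLOCAL17} but does not do that work.
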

The unpublished work \cite{2018arXivConflict} uses the MaxIS to solve conflict-free coloring on so called \emph{interval hypergraphs}. To show the \pslocal-hardness of MaxIS approximations we adapt their techniques and reduce the conflict-free multicoloring problem to the problem of computing MaxIS approximations.


\section{Completeness of Maximum Independent Set Approximation}
We define the \emph{conflict graph} $G_k$ of conflictfree $k$ coloring a hypergraph $H$:
The vertex set $V(G_k)$ consists of all triples $(e,v,c)$, $e\in E(H), v\in e, 1\leq c \leq k$. 
The edge set $E(G_k)$ is
\begin{align*}
E_{vertex} & =\left\{\{(e,v,c),(g,v,d)\} \mid v\in V(H), 1\leq c\neq d \leq k\right\}~\cup \\
E_{edge} & =\left\{\{(e,v,c),(e,u,d)\} \mid e\in E(H),~ u,v\in e, ~1\leq c, d \leq k\right\}~\cup \\
E_{color} & =\big\{\{(e,v,c),(g,u,c)\} \mid e,g\in E(H),~1\leq c\leq k,   \\
& \hspace{3.5cm}\{u,v\}\subseteq e \text{ or } \{u,v\}\subseteq g,~ \big\}~.
\end{align*} 

We explain how any independent set of $G_k$ naturally corresponds to a partial vertex coloring of the hypergraph $H$ and vice versa: Given an independent set $\mathcal{I}$ of $G_k$ define the partial vertex coloring:
\begin{align} \label{eqn:colorAssignment}
f & : V\rightarrow \{1,\ldots,k\}\cup \{\bot\}, \nonumber \\
f_{\mathcal{I}}(v) & = 
\left\{\begin{array}{lr}
c, & \exists e \text{ such that } (e,v,c)\in \mathcal{I}, \\
\bot, & otherwise~.
\end{array}\right.
\end{align} 
Vice versa, given a conflictfree coloring $f$ of $H$ with $k$ colors we construct an independent set of $G_k$ as follows: For each edge $e$  we add a single node $(e,v,c)$---breaking ties arbitrarily---to the independent set $\mathcal{I}_f$ where $v\in e$, $f(v)=c$ and there is no $u\neq v$ with $u\in e$ and $f(u)=c$.
More formally, we prove the following statement about the correspondence. 
\begin{lemma}\label{lem:technical}
Let $k>0$, $H$ be a hypergraph and $G_k$ the conflict graph of conflictfree $k$-coloring $H$. 
\begin{enumerate}[label=\alph*)]
\item Any conflict-free $k$-coloring $f$ of $H$ induces a maximum independent set $\mathcal{I}_f$ of the conflict graph $G_k$. The size of this maximum independent set is $m=|E(H)|$.
\label{lem:ColoringToIndependentSet}
\item For any independent set $\mathcal{I}\subseteq V(G_k)$ the induced coloring $f_{\mathcal{I}}$ is well defined and at least $|\mathcal{I}|$ edges of $H$ are happy in $f_{\mathcal{I}}$. 
\label{lem:IndependentSetToColoring}
\end{enumerate}
\end{lemma}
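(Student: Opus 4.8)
The plan is to first pin down the combinatorial meaning of each of the three edge types and then read off both statements. The single most useful observation is that $E_{edge}$ turns the triples sharing a fixed hyperedge $e$ into a clique of $G_k$: any two distinct triples $(e,v,c),(e,u,d)$ satisfy $u,v\in e$ and are therefore adjacent. Consequently every independent set of $G_k$ contains \emph{at most one} triple per hyperedge, which immediately caps $|\mathcal{I}|\le |E(H)|=m$ and makes the map $(e,v,c)\mapsto e$ injective on any independent set. The role of $E_{vertex}$ is to forbid a vertex from receiving two colors: if $(e,v,c),(g,v,d)\in\mathcal{I}$, then $c\neq d$ would create an $E_{vertex}$ edge, so $c=d$ and $f_{\mathcal{I}}$ is well defined. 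Finally $E_{color}$ encodes uniqueness inside a hyperedge: two same-color triples $(e,v,c),(g,u,c)$ whose \emph{distinct} vertices $u,v$ lie together in a common hyperedge are adjacent.

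For part \ref{lem:IndependentSetToColoring} I would argue as follows. Well-definedness is the $E_{vertex}$ remark above. For the count, fix any $(e,v,c)\in\mathcal{I}$ and claim its hyperedge $e$ is happy with witness $v$: indeed $f_{\mathcal{I}}(v)=c$, and if some other $u\in e$ with $u\neq v$ had $f_{\mathcal{I}}(u)=c$, then a triple $(g,u,c)\in\mathcal{I}$ would exist with $\{u,v\}\subseteq e$, placing an $E_{color}$ edge between $(e,v,c)$ and $(g,u,c)$ and contradicting independence. Since $(e,v,c)\mapsto e$ is injective, the $|\mathcal{I}|$ hyperedges obtained this way are distinct and all happy, so at least $|\mathcal{I}|$ edges are happy.

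For part \ref{lem:ColoringToIndependentSet} I would take the set $\mathcal{I}_f$ built by choosing, for each $e$, one triple $(e,v,c)$ with $v\in e$ and $c=f(v)$ a color unique to $v$ inside $e$ (such $v$ exists because $f$ is conflict-free). The first coordinates are distinct, so $|\mathcal{I}_f|=m$. Independence is a three-case check on two distinct members $(e,v,c),(g,u,d)$ (necessarily $e\neq g$): the $E_{edge}$ case is impossible since $e\neq g$; an $E_{vertex}$ edge would require $u=v$ and $c\neq d$, but $u=v$ forces $c=f(v)=f(u)=d$; and an $E_{color}$ edge would require $c=d$ together with two distinct equally-colored vertices in a common hyperedge, directly contradicting the uniqueness built into the choice of $v$. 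Maximality is then immediate from the bound $|\mathcal{I}|\le m$ established above: no independent set can exceed $m=|\mathcal{I}_f|$.

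The main thing to get right is the $E_{color}$ analysis, and in particular the degenerate reading in which the two triples share the \emph{same} vertex ($u=v$): such a pair carries no conflict—it merely records that one vertex is the unique-color representative of several hyperedges—so $E_{color}$ must be read as requiring $u\neq v$. Keeping this case separate is precisely what makes both the happiness argument in part \ref{lem:IndependentSetToColoring} and the independence check in part \ref{lem:ColoringToIndependentSet} go through cleanly; everything else is bookkeeping once the clique structure of $E_{edge}$ is in hand.
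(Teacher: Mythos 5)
Your proof is correct and follows essentially the same route as the paper's: the same construction of $\mathcal{I}_f$ (one unique-color witness triple per hyperedge), the same three-case adjacency analysis, the $E_{edge}$-clique bound giving $|\mathcal{I}|\leq m$ for maximality, and the same $E_{vertex}$/$E_{color}$ arguments for well-definedness and happiness in part b). Your explicit remark that $E_{color}$ must be read as requiring $u\neq v$ is a genuine clarification rather than a deviation: the paper's definition is ambiguous on this point, and its own proof of the $E_{color}$ case (and indeed the truth of part a), since two hyperedges may share the same unique-color witness) tacitly relies on exactly that reading.
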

\begin{proof}
\noindent \textit{Proof of \ref{lem:ColoringToIndependentSet}: }
We first show by a proof of contradiction that $\mathcal{I}_f$ is an independent set. So, let $\mathfrak{v}_1=(e,v,c)$ and $\mathfrak{v}_2=(g,u,d)$ be two nodes in $\mathcal{I}_f$ and assume that they are connected by an edge $h\in E(G_k)$. We perform a case distinction depending on whether $h\in E_{vertex}$, $h\in E_{edge}$ or $h\in E_{color}$ and deduce a contradiction in all three cases. 

$h\in E_{vertex}$: This means that $v=u$ and also that $c\neq d$. However, this would mean that edge $e$ added node $\mathfrak{v}_1$ to $\mathcal{I}_f$ because $v$ is the unique node with color $c$ in coloring $f$ and edge $g$ added $\mathfrak{v}_2$ to $\mathcal{I}_f$ because $v$ is the unique node with color $d$ in coloring $f$, a contradiction to $v$ having only one color in $f$. 

$h\in E_{edge}$: This means that $e=g$, i.e., edge $e$ added $\mathfrak{v}_1$ and $\mathfrak{v}_1$  to the independent set $\mathcal{I}_f$. However, each edge only adds one node to the independent set $\mathcal{I}_f$, a contradiction.

$h\in E_{color}$: This means that $c=d$, i.e., the nodes in the independent set are $\mathfrak{v}_1=(e,v,c)$ and $\mathfrak{v}_2=(g,u,c)$, and  $\mathfrak{v}_1$ is added to $\mathcal{I}_f$ by edge $e$ and $\mathfrak{v}_2$ is added to $\mathcal{I}_f$ by edge $g$. Furthermore, $\{v,u\}\subseteq e$ or $(\{v,u\}\subseteq g$ holds. Without loss of generality assume that $\{v,u\}\subseteq e$ holds. Edge $g$ added $\mathfrak{v}_2$ to the independent set $\mathcal{I}_f$ because $u$ is the only node among the nodes in edge $g$ with color $c$. Edge $e$ added $\mathfrak{v}_1$ to the independent set $\mathcal{I}_f$ because $v$ is the only node among the nodes in edge $e$ with color $c$ in $f$, a contradiction as $u\in e$ and $u$ also has color $c$ in $f$.

\textit{$\mathcal{I}_f$ is a maximum independent set:} As $f$ is a conflictfree coloring each edge $e$ adds one node of the form $(e,\star, \star)$ to the independent set. A node cannot be added to the independent by two distinct edges. Thus $|\mathcal{I}_f|\geq m=|E(H)|$. Further, any independent set of $G_k$ has at most $m$ nodes as any node of the form $(e,\star, \star)$ in the independent set forbids any other node of that form in the independent set due to the definition of $E_{edge}$. 

\medskip 

\noindent\textit{Proof of \ref{lem:IndependentSetToColoring}:}
We first show that the color assignment in \Cref{eqn:colorAssignment} is well defined, that is, for each node $u$ there is at most one $c$ such that $(\star,u,c)\in \mathcal{I}$. For contradiction, assume that there are two nodes $\mathfrak{v}_1$ of type $(\star,v,c)$ and $\mathfrak{v}_2$ of type $(\star,v,d)$ with $c\neq d$ in the independent set $\mathcal{I}$. By $E_{vertex}$ there is an edge in $G_k$ between $\mathfrak{v}_1$ and $\mathfrak{v}_2$, a contradiction to both nodes being in $\mathcal{I}$.

\textit{At least $|\mathcal{I}|$ edges are happy in $f_{\mathcal{I}}$: }  Let $\mathfrak{v}=(e,v,c)\in \mathcal{I}$ be a node of the independent set. We prove that edge $e$ is happy: Node $v$ is colored with color $c$  in $f_{\mathcal{I}}$; assume that there is a further node $u\in e, u\neq v$ that is colored with color $c$. By the definition of $f_{\mathcal{I}}$ this means that there is some node $\mathfrak{v}'=(g,u,c)$ in the independent set. As $\{u,v\}\subseteq e$ the nodes $\mathfrak{v}$ and $\mathfrak{v}'$ are connected in $G_k$ due to the definition of $E_{color}$, a contradiction. 
Thus for each node $(e,\star,\star)\in\mathcal{I}$ the edge $e$ is happy. 
Further, for each happy edge $e$ there is at most one node $(e,\star,\star)\in \mathcal{I}$ due to the definition of $E_{edge}$. Thus the number of happy edges equals the size of the independent set $\mathcal{I}$.
\end{proof}

The conflict graph $G_k$ can be efficiently simulated in $H$ in the \LOCAL model. The iterative computation of maximum independent set approximations in suitably chosen conflict graphs combined with  \Cref{lem:technical} are sufficient to prove the hardness in \Cref{thm:mainMaxISComplete}.
\begin{proof}[Proof of \Cref{thm:mainMaxISComplete}]
The containment was proven in \cite[Theorem 7.1]{SLOCAL17}. 

To prove the hardness we reduce from the \pslocal-complete conflictfree multicoloring problem (\Cref{thm:conflictComplete}). Assume that we can compute $\lambda$-approximations for MaxIS and let $H$ be a hypergraph with $m=\poly n$ edges as in \Cref{thm:conflictComplete}. First note, that the graphs used for the hardness in the proof of \Cref{thm:conflictComplete} all admit a conflictfree $k$-coloring where each node only has a \underline{single} color and $k=\polylog n$; fix this $k$ and let $\rho=\lambda\cdot \ln m +1$.
In the reduction we use phases $1,\ldots,\rho$ and in each phase we color some of the vertices in $V$ using a distinct palette of size $k$ for each phase; after each phase we remove all happy edges from the graph. We continue with describing the phases in more detail. For $i=1,\ldots,\rho$ let $H_i=(V,E_i)$ denote the hypergraph in phase $i$ where $H_1=H=(V,E)$ is the original graph.
In  phase $i$ we use the hypergraph $H_i=(V,E_i)$ to build the conflict graph $G^i_k$. $G^i_k$ has polynomially many nodes and edges and can be simulated locally. Then we compute an independent set $\mathcal{I}^i$ of $G^i_k$ that is a $\lambda$-approximation for MaxIS. Each node $v\in V(H)$ that has some $(v,\star,c)\in \mathcal{I}^i$ colors itself with color $c$ (using a distinct palette of size $k$ for each phase); all other nodes remain uncolored in this phase. The algorithm continues with the next phase.

We continue with analyzing the number of edges of the hypergraph that become happy, that is, are removed per phase. $G^{i}_k$ has an independent set of size $|E_i|$ because of \Cref{lem:technical}, \ref{lem:ColoringToIndependentSet} and as $H$ and also $H_i\subseteq H$ admit a conflictfree $k$-coloring . As $\mathcal{I}^i$ is a $\lambda$-approximation for MaxIS we obtain  $|\mathcal{I}^i|\geq \frac{|E_i|}{\lambda}$.
Due to \Cref{lem:technical}, \ref{lem:IndependentSetToColoring} there is a distinct happy edge in $E_i$ for each node in $\mathcal{I}^i$. Thus $|E_{i+1}|\leq |E_i|-|\mathcal{I}^i|\leq (1-1/\lambda)|E_i|$. Thus after $\rho$ phases  we have $\|E_{\rho+1}|\leq (1-1/\lambda)^{\rho}|E|\leq e^{\frac{\rho}{\lambda}}m<1$, i.e., all edges of the initial hypergraph $H$ are happy and removed. Thus the obtained multicoloring is conflictfree and the total number of colors is $k\cdot \rho=\polylog n$.
\end{proof}

It remains open whether the MIS or the $(\Delta+1)$ vertex coloring problem are \pslocal-complete.

\bibliographystyle{alpha}
\bibliography{references}

\appendix

\include{appendix}

\end{document}